\def\BibTeX{{\rm B\kern-.05em{\sc i\kern-.025em b}\kern-.08em
    T\kern-.1667em\lower.7ex\hbox{E}\kern-.125emX}}
\DeclareMathOperator\supp{Supp}
\begin{document}
\title{Achievable Rates and Algorithms for Group Testing with Runlength Constraints
}

\author{\IEEEauthorblockN{Stefano Della Fiore}
\IEEEauthorblockA{\textit{Deparment of Information Engineering} \\
\textit{University {of Brescia}}\\
Brescia (BS), Italy \\
s.dellafiore001@unibs.it}
\and
\IEEEauthorblockN{Marco Dalai}
\IEEEauthorblockA{\textit{Deparment of Information Engineering} \\
\textit{University {of Brescia}}\\
Brescia (BS), Italy \\
marco.dalai@unibs.it}
\and
\IEEEauthorblockN{Ugo Vaccaro}
\IEEEauthorblockA{\textit{Department of Informatics} \\
\textit{University {of Salerno}}\\
Fisciano (SA), Italy \\
uvaccaro@unisa.it}
}

\newtheorem{thm}{Theorem}[section]
\newtheorem{lem}[thm]{Lemma}
\newtheorem{cor}[thm]{Corollary}
\newtheorem{prop}[thm]{Proposition}
\newtheorem{conj}[thm]{Conjecture}
\newtheorem{prob}[thm]{Problem}
\newtheorem{ques}[thm]{Question}
\newtheorem{ex}[thm]{Exercise}
\newtheorem{rem}[thm]{Remark}

\theoremstyle{definition} 
\newtheorem{defn}[thm]{Definition} 
\newtheorem{exa}[thm]{Example}

\maketitle

\begin{abstract}
In this paper, we study bounds on the minimum length of $(k,n,d)$-superimposed codes introduced by Agarwal {\em et al.} \cite{Olgica}, in the context of Non-Adaptive Group Testing algorithms with runlength constraints. A $(k,n,d)$-superimposed code of length $t$ is a $t \times n$ binary matrix such that any two 1's  in each column are separated by a run of at least $d$ 0's, and such that for any column $\mathbf{c}$ and any other $k-1$ columns, there exists a row where $\mathbf{c}$ has $1$ and all the remaining $k-1$ columns have $0$. Agarwal {\em et al.} proved the existence of such codes with $t=\Theta(dk\log(n/k)+k^2\log(n/k))$. Here we investigate more in detail the coefficients in front of these two main terms as well as the role of lower order terms. We show that improvements can be obtained over the construction  in \cite{Olgica} by using different constructions and by an appropriate exploitation of the Lov\'asz Local Lemma in this context. Our findings also suggest $O(n^k)$ randomized Las Vegas algorithms for the construction of such codes. We also extend our results to Two-Stage Group Testing algorithms with runlength constraints.
\end{abstract}

\begin{IEEEkeywords}
Lov\'asz Local Lemma,  superimposed codes,  runlength-constrained codes.
\end{IEEEkeywords}

\section{Introduction}
Group Testing refers to the scenario in which one has a population $I$ of individuals, and an unknown subset $P$ of $I$, commonly referred to as ``positives''. The goal is to determine the unknown elements of $P$ by performing tests on arbitrary subsets $A$ of $I$ (called {\em pools}), and the outcome of the test is assumed to return the value 1 (positive) if $A$ contains at least one element of the unknown set $P$, the value 0 (negative), otherwise. The problem was first introduced by Dorfman \cite{Dorf} during WWII, in the context of mass blood testing. Since then, Group Testing techniques have found applications in a large variety of areas, ranging from DNA sequencing to quality control, data security to network analysis, and much more. We refer the reader to the excellent monographs \cite{DuHwang,JSA} for an account of the vast literature on the subject.

 Group Testing procedures can be adaptive or non-adaptive. In adaptive Group Testing, the tests are performed sequentially, and the content of the pool tested  at the generic step $i$ might depend on the previous $i-1$ test outcomes.
 Conversely, in non-adaptive Group
Testing all pools are a-priori set, and tests are carried out in parallel. Non-adaptive Group Testing (NAGT) schemes typically require  more tests to discover
the positives, but they are faster since tests can be performed in parallel. To combine the advantages of both techniques, while
mitigating their limitations, it is sometimes preferable to implement a hybrid approach, where a first  screening is performed via a NAGT 
 algorithm, followed by a simple one-by-one testing of  the  members that are identified in the first stage as
potentially positives. This latter approach is usually called Two-Stage Group Testing \cite{Vaccaro2}. 

In NAGT, the algorithm to determine the positives is usually represented by means of a $t\times n$ binary matrix $M$, where each row of $M$ represents a test while each column is associated to a distinct member of the population $I=\{1, 2, \ldots , n\}$. More precisely, we have 
$M_{ij}=1$ if and only if the member $j\in I$ belongs to the $i$-th test.  In general, one assumes a known upper bound $k$ on the cardinality of the unknown set of positives $P$. Having said that, the property one usually requires for $M$ to represent a correct (and efficiently decodable) NAGT is the following \cite{DuHwang}:
for any $k$-tuple of the $n$ columns of $M$ we demand  that for any column $\mathbf{c}$ of the given $k$-tuple, there exists a row ${i \in \{1,\ldots , t\}}$ such that $\mathbf{c}$ has symbol $1$ in row $i$ and all the remaining $k-1$ columns of the $k$-tuple have a $0$ in the same row $i$. This condition renders matrices $M$ with such a property equivalent to the well known superimposed codes introduced in the seminal paper by Kautz and Singleton \cite{KS} and, independently, by Erd\"os {\em et al.} in \cite{Erdos}.

Motivated by applications in topological DNA-based data storage, the authors of \cite{Olgica} introduced an interesting new variant of NAGT, in which the associated test matrix $M$ has to satisfy additional constraints, in order to comply with the biological constraints of the problem they want to solve. Informally, one of the main problems studied in \cite{Olgica} is to  show the existence of a superimposed code $M$ with a "small" number $t$ of rows and satisfying the following additional property: any two 1's  in each column are  separated by a run of  at least $d$ 0's. We refer the reader to \cite{Olgica} for the rationale behind this run-length constraint. The main achievability result obtained in \cite{Olgica} says that codes with these properties exist for $t=\Theta(dk\log(n/k)+k^2\log(n/k))$.

{\bf Our results.} We study the achievable coefficients in front of the $dk\log(n/k)$ and $k^2\log(n/k)$ terms and on lower order terms in $k$, whose values are of importance as they determine the achievable rates of such codes for fixed values of $d$ and $k$. We show that better results than those derived in \cite{Olgica} can be obtained using different random coding constructions which also admit simpler analyses. Also, we show that improved results can be obtained by an appropriate use of the  Lov\'asz Local Lemma (see, e.g., \cite{AlonSpencer}). By exploiting the celebrated result by Moser and Tardos \cite{MT}, this directly implies a $O(n^k)$ randomized Las Vegas  algorithm to construct such codes. 
In the final part of this paper, we also extend our results to Two Stages Group Testing algorithms.

\section{New Upper Bounds}\label{sec:superimposed}

Throughout the paper, the logarithms without subscripts are in base two, and we denote with $\ln (\cdot)$ the natural logarithm.  For notation convenience we denote with $[a,  b]$ the set ${\{a, a+1, \ldots, b\}}$.

\begin{defn}\cite{Olgica}
Let $k$, $n$, $d$ be positive integers,  $k \leq n$.
A $(k, n,d)$-\emph{superimposed} code is a $t \times n$ binary matrix $M$ such that any two 1's  in each column of $M$  are  separated by a run of  at least $d$ 0's,  and for any $k$-tuple of the columns of $M$ we have that for any column $\mathbf{c}$ of the given $k$-tuple, there exists a row ${i \in [1,t]}$ such that $\mathbf{c}$ has symbol $1$ in row $i$ and all the remaining $k-1$ columns of the $k$-tuple are equal to $0$.  The number of rows $t$ of $M$ is called the length of the $(k,n,d)$-superimposed code.
\end{defn}

\begin{defn}
A $(k,n,d,w)$-\emph{superimposed} code is a $(k,n,d)$-superimposed with the additional constraint that each column has weight $w$ (number of ones).
\end{defn}

First,  we need the following enumerative lemma.

\begin{lem}\label{lem:enumerative}
Let $S \subseteq \{0, 1\}^t$ be the set of all distinct binary vectors of length $t$ such that each vector  has Hamming weight $w \geq 1$ and  any two 1's  in each vector  are  separated by a run of  at least $d$ 0's.  If $t \geq (w-1)d + w$,  then
$$
	|S| = \binom{t - (w-1) d}{w}\,.
$$
\end{lem}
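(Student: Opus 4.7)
The plan is to set up a bijection between valid vectors in $S$ and $w$-element subsets of $[1, t-(w-1)d]$. I would begin by parameterizing each vector in $S$ by the sorted positions $1 \leq p_1 < p_2 < \cdots < p_w \leq t$ of its $w$ ones. The runlength condition — at least $d$ zeros between consecutive ones — translates cleanly into the gap condition $p_{i+1} - p_i \geq d+1$ for $i = 1, \ldots, w-1$, since $d$ intervening zero-positions plus one starting and one ending one-position produce a difference of $d+1$.

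Next I would introduce the substitution $q_i = p_i - (i-1)d$ for $i = 1, \ldots, w$. A short computation shows $q_{i+1} - q_i = (p_{i+1} - p_i) - d \geq 1$, so the $q_i$ form a strictly increasing sequence of positive integers. The smallest admissible value is $q_1 \geq 1$ and the largest is $q_w = p_w - (w-1)d \leq t - (w-1)d$. Conversely, given any strictly increasing tuple $1 \leq q_1 < \cdots < q_w \leq t - (w-1)d$, the inverse map $p_i = q_i + (i-1)d$ manifestly recovers a valid vector in $S$, so the correspondence is a bijection.

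Counting the image set then gives $|S| = \binom{t-(w-1)d}{w}$. The hypothesis $t \geq (w-1)d + w$ is exactly what is needed to guarantee that the upper endpoint $t-(w-1)d$ of the ambient set is at least $w$, so that $\binom{t-(w-1)d}{w}$ is nonzero and, in particular, the set $S$ is nonempty.

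I do not anticipate a serious obstacle here: the argument is a classical ``gap removal'' bijection, and the only delicate point is keeping the boundary cases straight — in particular, making sure the shift $(i-1)d$ (rather than $id$) is used so that $q_1 = p_1$, and verifying that the threshold $t \geq (w-1)d + w$ is precisely the nonemptiness condition for the target set.
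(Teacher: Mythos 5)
Your proof is correct and is essentially the same argument as the paper's: the substitution $q_i = p_i - (i-1)d$ is exactly the coordinate-level description of the paper's bijection, which removes (resp.\ inserts) a run of $d$ zeros between each pair of consecutive ones. Your version just makes the bijection and the boundary checks explicit.
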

\begin{proof}
Let $A$ be the set of all distinct binary vectors of length $t-(w-1)d$ and weight $w$.
One can see that $|S| = |A|$ since each element of $S$ can be obtained from an element $a \in A$ by adding between each pair of  consecutive ones in $a$ exactly $d$ 0's.  Conversely,  each element of $A$ can be obtained from an element $s \in S$ by removing between each pair of consecutive ones in $s$  exactly $d$ 0's. 
\end{proof}

We also need the following technical lemma and an easy corollary, which have been proved in \cite{Vaccaro1}.  We report here the proofs for the reader's convenience.

\begin{lem}\label{lem:firstBoundBinomial}
Let $a, b, c$ be positive integers such that $c \leq a \leq b$. We have that
$$
	\frac{a}{b} \cdot \frac{a-c}{b-c} \leq \left(\frac{a - \frac{c}{2}}{b-\frac{c}{2}}\right)^2\,.
$$
\end{lem}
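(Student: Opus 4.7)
The plan is to reduce the inequality to a single elementary comparison by a symmetric change of variables that makes the two sides manifestly close to one another. Specifically, I would set $x = a - c/2$ and $y = b - c/2$, so that the numerator $a - c/2$ and denominator $b - c/2$ of the right-hand side become $x$ and $y$, while the four factors on the left-hand side become $x + c/2$, $x - c/2$, $y + c/2$, $y - c/2$. Note that $x, y > 0$ since $c \leq a \leq b$, and in fact $x, y \geq c/2$ so all these quantities are nonnegative.

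With this substitution the inequality to prove is
$$
\frac{(x + c/2)(x - c/2)}{(y + c/2)(y - c/2)} \;\leq\; \frac{x^2}{y^2},
$$
which by the difference-of-squares identity collapses to
$$
\frac{x^2 - c^2/4}{y^2 - c^2/4} \;\leq\; \frac{x^2}{y^2}.
$$
From here I would cross-multiply (both denominators being strictly positive as noted above), cancel the common $x^2 y^2$ term, and see that the inequality is equivalent to $x^2 \leq y^2$, which is immediate from $a \leq b$.

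There is essentially no obstacle: the only thing to mention is the sign check that makes the cross-multiplication legitimate, which follows directly from $c \leq a \leq b$. An alternative route, worth noting but longer, would be to observe that the function $t \mapsto \log\frac{a-t}{b-t}$ is concave on $[0,c]$ (since its second derivative has the sign of $(a-b)\bigl((a+b) - 2t\bigr)$, which is nonpositive in the relevant range), and then apply concavity at the midpoint $t = c/2$ to deduce the same bound. I would keep the short algebraic proof as the main argument and omit the concavity remark.
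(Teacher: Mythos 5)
Your proof is correct and is essentially the paper's own argument in different clothing: the paper adds $4ab(a-c)(b-c)$ to both sides of $a(a-c)c^2 \le b(b-c)c^2$ to reach the cross-multiplied form $a(a-c)(2b-c)^2 \le b(b-c)(2a-c)^2$, which is exactly your reduction to $x^2 \le y^2$ after the substitution $x = a - c/2$, $y = b - c/2$ (note $x^2 - c^2/4 = a(a-c)$ and $y^2 - c^2/4 = b(b-c)$). The only point worth tightening is that strict positivity of the denominator $y^2 - c^2/4 = b(b-c)$ requires $b > c$, which is implicit in the statement since otherwise the left-hand side is undefined.
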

\begin{proof}
Clearly $a (a-c) c^2 \leq b (b-c) c^2$. Then adding the quantity $4ab (a-c)(b-c)$ to both members implies that $a(a-c) (2b-c)^2 \leq b(b-c) (2a-c)^2$. Therefore Lemma~\ref{lem:firstBoundBinomial} follows.
\end{proof}

\begin{cor}\label{cor:boundBinomial}
Let $a, b, c$ be positive integers such that $ c \leq a \leq b$.  We have that
\begin{equation}\label{eq:boundB}
	\frac{\binom{a}{c}}{\binom{b}{c}} \leq \left(\frac{a - \frac{c-1}{2}}{b - \frac{c-1}{2}}\right)^{c}\,.
\end{equation}
\end{cor}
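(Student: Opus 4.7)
The plan is to expand the ratio of binomials as a product and then pair up factors symmetrically so that Lemma~\ref{lem:firstBoundBinomial} can be applied to each pair. Concretely, write
\[
\frac{\binom{a}{c}}{\binom{b}{c}} \;=\; \prod_{i=0}^{c-1} \frac{a-i}{b-i},
\]
and for each $i < (c-1)/2$ pair the $i$-th factor with the $(c-1-i)$-th factor.

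Next, I would apply Lemma~\ref{lem:firstBoundBinomial} to each such pair by substituting $a \mapsto a-i$, $b \mapsto b-i$, $c \mapsto c-1-2i$ (these indeed satisfy the hypotheses: $c-1-2i$ is a positive integer because $i<(c-1)/2$, and $c-1-2i \le a-i$ follows from $c\le a$). The conclusion of the lemma then reads
\[
\frac{a-i}{b-i}\cdot\frac{a-(c-1-i)}{b-(c-1-i)} \;\le\; \left(\frac{a-i-\tfrac{c-1-2i}{2}}{b-i-\tfrac{c-1-2i}{2}}\right)^{\!2}
\;=\; \left(\frac{a-\tfrac{c-1}{2}}{b-\tfrac{c-1}{2}}\right)^{\!2}.
\]
The beauty here is that the offset cancels and every pair yields the same right-hand side.

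Finally, I would take the product over all pairs. If $c$ is even, the $c/2$ pairs cover all factors and multiplying yields the claimed bound $\left(\frac{a-(c-1)/2}{b-(c-1)/2}\right)^c$ directly. If $c$ is odd, the unpaired middle factor corresponds to $i=(c-1)/2$, which is exactly $\frac{a-(c-1)/2}{b-(c-1)/2}$; combined with the $(c-1)/2$ pairs this again gives the same bound.

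I do not anticipate a real obstacle: the substitution $c\mapsto c-1-2i$ and verifying its hypotheses is the only subtle step, and the parity split for odd $c$ is routine. The argument is essentially a clean telescoping of Lemma~\ref{lem:firstBoundBinomial} across symmetric pairs of factors in the binomial ratio.
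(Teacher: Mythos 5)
Your proof is correct and follows essentially the same route as the paper: expand the ratio as $\prod_{i=0}^{c-1}\frac{a-i}{b-i}$, pair the $i$-th and $(c-1-i)$-th factors, apply Lemma~\ref{lem:firstBoundBinomial} with the substitution $c\mapsto c-1-2i$ to each pair, and leave the middle factor alone when $c$ is odd. Your write-up even makes the substitution and the verification of its hypotheses more explicit than the paper does.
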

\begin{proof}
Expanding the LHS of \eqref{eq:boundB} we get
\begin{equation}\label{eq:expBin}
	\frac{\binom{a}{c}}{\binom{b}{c}} = \frac{a}{b} \cdot \frac{a-1}{b-1} \cdots \frac{a-c+1}{b-c+1}\,.
\end{equation}
Let us group the terms in \eqref{eq:expBin} into pairs as follows
\begin{equation}\label{eq:termsBin}
	\frac{a-i}{b-i} \cdot \frac{a- (c-i-1)}{b-(c-i-1)} \text{ for } i=0,\ldots, \left\lceil \frac{c-1}{2} \right\rceil -1 \,.
\end{equation}
If $c$ is odd then we leave alone the term $(a-\frac{c-1}{2})/(b-\frac{c-1}{2})$.  By Lemma \ref{lem:firstBoundBinomial},  each term in \eqref{eq:termsBin} can be upper bounded by
$$
	\frac{a-i}{b-i} \cdot \frac{a- (c-i-1)}{b-(c-i-1)} \leq \left(\frac{a - \frac{c-1}{2}}{b-\frac{c-1}{2}}\right)^2\,.
$$
Hence Corollary \ref{cor:boundBinomial} follows.
\end{proof}

The main tool to prove Theorem \ref{th:LLExistence} is the Lov\'asz Local Lemma for the symmetric case.  We state here the lemma.

\begin{lem} \cite{AlonSpencer}\label{lem:LLL}
Let $E_1,  E_2,  \ldots,  E_m$ be events in an arbitrary probability space.  Suppose that each event $E_i$ is mutually independent of a set of all other events $E_j$ but at most $d$, and that $\Pr(E_i) \leq p$ for all $1 \leq i \leq m$. If
$$
	e d p \leq 1
$$
then $\Pr(\cap_{i=1}^m \overline{E}_i) > 0$.
\end{lem}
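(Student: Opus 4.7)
The plan is to derive the symmetric statement from the general (asymmetric) Lov\'asz Local Lemma, which is the more natural object to prove by induction. Specifically, I would first establish: if one can find reals $x_1,\dots,x_m\in[0,1)$ such that $\Pr(E_i)\le x_i\prod_{j\in N(i)}(1-x_j)$ for every $i$, where $N(i)$ is the set of indices of events on which $E_i$ depends, then $\Pr(\cap_i\overline{E}_i)\ge \prod_i(1-x_i)>0$. The symmetric statement in the lemma then follows by choosing $x_i=1/(d+1)$ uniformly and verifying the hypothesis via the standard inequality $(1-1/(d+1))^d\ge 1/e$, which converts $ep(d+1)\le 1$ (a slight strengthening of the hypothesis $edp\le 1$, with the same asymptotic content) into the required product bound.

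The core of the argument is an inductive claim: for every event index $i$ and every subset $S\subseteq[1,m]\setminus\{i\}$,
\[
\Pr\!\bigl(E_i \,\big|\, \textstyle\bigcap_{j\in S}\overline{E}_j\bigr)\le x_i.
\]
I would prove this by induction on $|S|$. The base case $S=\emptyset$ is immediate from the hypothesis. For the inductive step, I would split $S=S_1\cup S_2$ where $S_1=S\cap N(i)$ and $S_2=S\setminus N(i)$, write the conditional probability as a ratio conditioned on $\bigcap_{j\in S_2}\overline{E}_j$, and bound the numerator using the independence of $E_i$ from the events indexed by $S_2$, and the denominator via the chain rule combined with the inductive hypothesis applied to strictly smaller subsets. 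This yields the bound $x_i\prod_{j\in N(i)\setminus S_1}(1-x_j)\le x_i$.

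Once this inductive claim is in hand, the conclusion of the asymmetric LLL is a one-line chain-rule application:
\[
\Pr\!\bigl(\textstyle\bigcap_{i=1}^m\overline{E}_i\bigr)=\prod_{i=1}^m\Pr\!\bigl(\overline{E}_i\,\big|\,\textstyle\bigcap_{j<i}\overline{E}_j\bigr)\ge \prod_{i=1}^m(1-x_i)>0.
\]
Finally, I would specialise to $x_i=1/(d+1)$: the required inequality becomes $p\le \frac{1}{d+1}\bigl(1-\frac{1}{d+1}\bigr)^d$, which holds whenever $ep(d+1)\le 1$; the hypothesis $edp\le 1$ together with the standard slack from $(1-1/(d+1))^d\ge 1/e$ delivers the symmetric form as stated.

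The main obstacle is the inductive step, and in particular being careful about two things: (i) ensuring that the subset $S_1$ used in the denominator is strictly smaller than $S$ so the induction is well-founded (this requires checking the case $S_1=\emptyset$ separately, where the denominator equals $1$), and (ii) correctly invoking the mutual independence of $E_i$ from the $\sigma$-algebra generated by $\{E_j : j\notin N(i)\}$, which is stronger than pairwise independence and is what makes the numerator bound $\Pr(E_i\mid\bigcap_{j\in S_2}\overline{E}_j)=\Pr(E_i)$ valid. Everything else — setting $x_i=1/(d+1)$ and checking the arithmetic $(d/(d+1))^d\ge 1/e$ — is routine.
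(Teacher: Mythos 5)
The paper offers no proof of this lemma at all --- it is quoted from Alon--Spencer as a known result --- so there is no internal argument to compare against; your outline is the standard derivation of the symmetric Local Lemma from the asymmetric one, and the inductive core is sound as you describe it, including the two delicate points you correctly flag (well-foundedness of the induction through the chain-rule expansion of the denominator, and the need for \emph{mutual} rather than pairwise independence when evaluating the numerator as $\Pr(E_i)$).

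The genuine gap is in your final specialisation. Taking $x_i = 1/(d+1)$, the asymmetric lemma requires $p \leq \frac{1}{d+1}\bigl(1-\frac{1}{d+1}\bigr)^{d}$, and since $\bigl(1-\frac{1}{d+1}\bigr)^{d} \geq 1/e$ this follows from $ep(d+1)\leq 1$ --- but \emph{not} from $edp \leq 1$, which is a strictly weaker hypothesis. In fact no choice of the $x_i$ can close this gap: $\max_{x} x(1-x)^{d} = \frac{d^{d}}{(d+1)^{d+1}} < \frac{1}{ed}$ for every $d \geq 1$ (equivalently, $(1+1/d)^{d+1} > e$), so the constant $ed$ is simply out of reach of the asymmetric lemma; the version with $edp\leq 1$ is true, but it needs Shearer's sharper symmetric bound $p \leq \frac{(d-1)^{d-1}}{d^{d}} > \frac{1}{ed}$ (for $d\ge 2$), not the ``standard slack'' you invoke in your last sentence, which points in the wrong direction. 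To be fair, the root cause is the paper's statement: Alon--Spencer's corollary reads $ep(d+1)\leq 1$, and in the paper's own applications the quantity $f$ substituted for $d$ counts the event itself among the ``dependent'' events, so it is already at least (degree $+\,1$) and the $ep(d+1)\leq 1$ form you actually prove is exactly what is used there. As a proof of the lemma as printed, however, the final step fails; you should either strengthen the hypothesis to $ep(d+1)\leq 1$ or prove the tight symmetric bound directly.
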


Now, we are ready to state our main result.

\begin{thm}\label{th:LLExistence}
There exists a $(k,n,d,w)$-superimposed code of length $t$,  where $t$ is the minimum integer such that the following inequality holds
\begin{equation}\label{eq:standBound}
e  k \left[ \binom{n}{k} - \binom{n-k+1}{k} \right] \left( \frac{w (k-1) - \frac{w-1}{2}}{t - (w-1) d - \frac{w-1}{2}} \right)^w \leq 1.
\end{equation}
\end{thm}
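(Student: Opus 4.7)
My plan is to establish the existence of a $(k,n,d,w)$-superimposed code by a probabilistic construction controlled by the symmetric Lov\'asz Local Lemma (Lemma~\ref{lem:LLL}). I will sample each of the $n$ columns of $M$ independently and uniformly at random from the set $S$ of length-$t$ binary vectors of weight $w$ obeying the runlength-$d$ constraint; provided $t \geq (w-1)d + w$, Lemma~\ref{lem:enumerative} gives $|S| = \binom{t-(w-1)d}{w}$. By construction, the random matrix automatically satisfies the in-column runlength condition, so only the superimposition property remains to be forced.

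For each $k$-subset $T \subseteq [1,n]$ and distinguished index $c \in T$, I will introduce the bad event
$$
E_{T,c} = \Bigl\{\,\supp(c) \subseteq \bigcup_{j \in T \setminus \{c\}} \supp(j)\,\Bigr\},
$$
so that $M$ is a valid $(k,n,d,w)$-superimposed code if and only if no $E_{T,c}$ occurs. To upper-bound $\Pr(E_{T,c})$, I will condition on the $k-1$ random columns indexed by $T \setminus \{c\}$ and observe that the union $U$ of their supports satisfies $|U| \leq w(k-1)$. The number of vectors in $S$ whose support lies in a fixed set of size $|U|$ is at most $\binom{|U|}{w}$; here I will intentionally drop the runlength constraint on the support of $c$, since the resulting overcount is precisely what makes the estimate tractable. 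This yields
$$
\Pr(E_{T,c}) \leq \frac{\binom{w(k-1)}{w}}{\binom{t-(w-1)d}{w}},
$$
which I will convert via Corollary~\ref{cor:boundBinomial} into
$$
p := \left(\frac{w(k-1) - \frac{w-1}{2}}{t-(w-1)d - \frac{w-1}{2}}\right)^{\!w}.
$$

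Since $E_{T,c}$ depends only on the random columns indexed by $T$, it is mutually independent of every $E_{T',c'}$ with $T' \cap T = \emptyset$. A combinatorial count of the events $E_{T',c'}$ whose indexing $k$-set $T'$ meets $T$ will give a dependency degree of at most $k\bigl[\binom{n}{k} - \binom{n-k+1}{k}\bigr]$, matching the prefactor in~\eqref{eq:standBound}. Applying Lemma~\ref{lem:LLL} with probability $p$ and this degree then yields $\Pr\bigl(\bigcap_{T,c} \overline{E}_{T,c}\bigr) > 0$ whenever~\eqref{eq:standBound} holds, producing the desired code.

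I expect the main technical obstacle to be the probability bound: I must leverage the runlength constraint on the conditioning columns only through their weight $w$, so as to control $|U| \leq w(k-1)$, while simultaneously dropping the runlength constraint on the target column so that admissible supports are bounded by $\binom{|U|}{w}$. Keeping these two uses of the constraint balanced is precisely what lets Corollary~\ref{cor:boundBinomial} collapse the resulting ratio of binomials into the clean power form in the theorem; the dependency count and the final LLL invocation are then routine bookkeeping.
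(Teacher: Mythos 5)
Your construction, bad events, probability bound, and use of the symmetric Local Lemma all coincide with the paper's proof; in fact your way of bounding $\Pr(E_{T,c})$ (condition on the $k-1$ other columns, note the union $U$ of their supports has $|U|\le w(k-1)$, and overcount the admissible supports of the target column by $\binom{|U|}{w}$, dropping its runlength constraint) is a cleaner justification of the step the paper dispatches with ``it can be seen that the maximum is achieved when the $w(k-1)$ ones are placed in different rows.''

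There is, however, one concrete mismatch in the dependency count, and it is exactly where the constant in \eqref{eq:standBound} comes from. If, as you state, $E_{T,c}$ is declared mutually independent only of those $E_{T',c'}$ with $T'\cap T=\emptyset$, then the number of events in the independent family is $k\binom{n-k}{k}$, and the dependency degree you obtain is
$$
k\left[\binom{n}{k}-\binom{n-k}{k}\right],
$$
which is strictly larger than the prefactor $k\bigl[\binom{n}{k}-\binom{n-k+1}{k}\bigr]$ in \eqref{eq:standBound}; the two differ by $k\binom{n-k}{k-1}$. So ``counting the events whose $T'$ meets $T$'' does not give the claimed value, and your argument as written only proves the theorem with the weaker (larger) prefactor. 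To get the stated bound, the paper uses the finer independence claim from \cite{Vaccaro1}: writing the events as $E_{j,C}$ with $|C|=k-1$, the event $E_{i,A}$ is mutually independent of \emph{all} $E_{j,C}$ with $C\subseteq[1,n]\setminus(A\cup\{i\})$, even when the distinguished index $j$ lies inside $A\cup\{i\}$, so that the two events share the column $\mathbf{c}_j$. This is not the routine ``disjoint underlying random variables'' criterion and requires a separate argument (the point being, roughly, that $E_{j,C}$ is monotone in $\mathbf{c}_j$ in a way that does not bias $E_{i,A}$); you would need to supply or cite that argument to close the gap.
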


\begin{proof}
Let $M$ be a $t \times n$ binary matrix, where each column $\mathbf{c}$ is picked uniformly at random between the set of all distinct binary vectors of length $t$ such that each column has weight $w$ and  any two 1's  in each column of $M$  are  separated by a run of  at least $d$ 0's.  Therefore by Lemma \ref{lem:enumerative} we have that
$$\Pr(\mathbf{c}) = \binom{t - (w-1) d}{w}^{-1}\,.$$
For a given index $i \in [1, n]$ and a set of column-indices $B$, $|B| = k-1$, $i \not \in B$,  let $E_{i, B}$ be the event such that for every row in which $\mathbf{c}_i$ (the $i$-th column) has $1$,  there exists an index $j \in B$ such that $\mathbf{c}_j$ has $1$ in that same row. We can write this event in terms of supports as $\supp(\mathbf{c}_i)\subseteq \supp(\mathbf{c}_B)$.  There are $n \binom{n-1}{k-1}$ such events.  We can express the probability of such an event as follows
\begin{align}\label{eq:prob}
	\Pr(E_{i, B}) = 
	&\sum_{c' = (c'_1, \ldots, c'_{k-1})}  \Pr(\mathbf{c}_B = c') \cdot \nonumber \\ & \Pr(\supp(\mathbf{c}_i)\subseteq \supp(\mathbf{c}_B) |  \mathbf{c}_B = c'),
\end{align}
where we have denoted with $\mathbf{c}_B$ the vector $(\mathbf{c}_{j_1}, \ldots, \mathbf{c}_{j_{k-1}})$ in which  $j_1, \ldots,  j_{k-1}$ are the elements of $B$.  The sum in \eqref{eq:prob} is over all the possible configurations of $k-1$ vectors of length $t$, weight $w$ and the distance between ones in each column is at least $d$.  Then, we can upper bound \eqref{eq:prob} by the maximum of $\Pr(\supp(\mathbf{c}_i)\subseteq \supp(\mathbf{c}_B) |  \mathbf{c}_B = c')   $ over all $k-1$ vectors $c'= (c'_1, \ldots, c'_{k-1})$. 
Therefore, we can consider the worst-case scenario where the $k-1$ columns of $M$ with indices in $B$ maximize this probability.  It can be seen that the maximum is achieved when the $w (k-1)$ ones of the $k-1$ columns indexed by $B$ are placed in $w (k-1)$ different rows.
Hence,
\begin{equation}\label{eq:upperbound}
\Pr(E_{i, B}) \leq \frac{\binom{w(k-1)}{w}}{\binom{t-(w-1)d}{w}} \, .
\end{equation}
Using Corollary \ref{cor:boundBinomial} we upper bound \eqref{eq:upperbound} as follows
\begin{equation}\label{eq:upperboundSimp}
\Pr(E_{i, B}) \leq  \left( \frac{w (k-1) - \frac{w-1}{2}}{t - (w-1) d - \frac{w-1}{2}} \right)^w \, .
\end{equation}
Proceeding as in \cite{Vaccaro1},  it can be proved that an arbitrary event $E_{i, A}$ is mutually independent from all the events $E_{j, C}$, where $C \subseteq [1, n] \setminus (A \cup \{i\})$ and $j \not \in C$.  Since the number of events $E_{j,C}$ is equal to
$$
\binom{n-k}{k-1} (n-k+1) = k \binom{n-k+1}{k}\,,
$$
each event $E_{i,A}$ is dependent of at most 
\begin{equation}\label{eq:D}
f = k \left[ \binom{n}{k} - \binom{n-k+1}{k} \right]
\end{equation}
other events. 
If the probability that none of the events $E_{i,A}$ occurs is strictly positive then there exists a matrix $M$ that is a $(k,n,d,w)$-superimposed code of length $t$.
Therefore, using Lemma \ref{lem:LLL} and taking $p$ equal to the RHS of \eqref{eq:upperboundSimp} and $f$ as defined in equation \eqref{eq:D},  Theorem \ref{th:LLExistence} follows.
\end{proof}

\begin{rem}\label{rem:UnionBound}
We note that in Theorem \ref{th:LLExistence} we could use the union bound instead of the Local Lemma.  Since the total number of events is $n \binom{n-1}{k-1}$, we have that there exists a $(k,n,d,w)$-superimposed code of length $t$, provided that
\begin{equation}\label{eq:UnionBound}
	n \binom{n-1}{k-1} \left( \frac{w (k-1) - \frac{w-1}{2}}{t - (w-1) d - \frac{w-1}{2}} \right)^w < 1\,.
\end{equation}
\end{rem}

In \cite{Olgica}  the authors  proved that a $(k,n,d,w)$-superimposed code of length $t$ exists, provided that 
\begin{equation}\label{eq:OlgicaBound}
n \binom{n-1}{k-1} \left( \frac{w(k-1)}{t - (2d+1)(w-1)} \right)^w < 1\,.
\end{equation}

It is clear that our  bound given in Remark \ref{rem:UnionBound} is better than the bound given in  \eqref{eq:OlgicaBound} since
$$
	 \frac{w (k-1) - \frac{w-1}{2}}{t - (w-1) d - \frac{w-1}{2}} \leq \frac{w(k-1)}{t - (2d+1)(w-1)}
$$
for all positive integers $w,  k, d$.  

If we compare the bounds of Theorem \ref{th:LLExistence} and Remark \ref{rem:UnionBound} then it has been proved in \cite{Vaccaro1} that
\begin{equation}\label{eq:BoundUgoK}
	e k \left[ \binom{n}{k} - \binom{n-k+1}{k} \right] \leq n \binom{n-1}{k-1}
\end{equation}
for all $k \leq 0.667 \sqrt{n}$. Therefore when $k$ is much smaller than $n$ (which is indeed the case in circumstances of interest), the Local Lemma performs better than the union bound.  It is important to note that a conjecture of Erd\H{o}s, Frankl and F\"uredi \cite{Erdos} says that for $k \geq \sqrt{n}$ optimal superimposed codes have length equal to $n$.  The current best known result has been proved in \cite{Shann} which states that if $k \geq 1.157 \sqrt{n}$ then the minimum length of superimposed codes is equal to $n$. 

\begin{cor}\label{cor:impBound}
There exists a $(k,n,d,w)$-superimposed code of length $t$, where
\begin{multline}\label{eq:boundwithW}
	t \leq \Bigg\lceil (w-1) d +  \frac{w-1}{2} + \left(w(k-1) - \frac{w-1}{2}\right) \cdot  \\ \left(  \min\left\{ n \binom{n-1}{k-1},  e k \left[ \binom{n}{k} - \binom{n-k+1}{k} \right] \right\} \right)^{\frac{1}{w}}  \Bigg\rceil \,.
\end{multline}
\begin{proof}
It easily follows rearranging the terms in equation \eqref{eq:standBound} and in equation \eqref{eq:UnionBound}.
\end{proof}
\end{cor}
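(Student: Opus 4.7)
The plan is to solve each of the two sufficient existence conditions for $t$ and then retain the smaller of the two resulting upper bounds. These conditions are \eqref{eq:standBound}, obtained via the Lovász Local Lemma in Theorem~\ref{th:LLExistence}, and \eqref{eq:UnionBound}, obtained via the union bound in Remark~\ref{rem:UnionBound}. Both have identical left-hand factors of the form $(A/B)^w$ with
$$
A := w(k-1) - \frac{w-1}{2}, \qquad B := t - (w-1)d - \frac{w-1}{2},
$$
so the two can be treated in parallel.

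First, I would rewrite each condition schematically as $Q \cdot (A/B)^w \leq 1$, where $Q$ denotes the relevant combinatorial prefactor, namely either $e k \bigl[\binom{n}{k} - \binom{n-k+1}{k}\bigr]$ (Local Lemma) or $n \binom{n-1}{k-1}$ (union bound). Since $B > 0$ for any admissible length, this inequality is equivalent to $B \geq A \cdot Q^{1/w}$, which after substituting the definitions of $A$ and $B$ and isolating $t$ becomes
$$
t \geq (w-1) d + \frac{w-1}{2} + \left(w(k-1) - \frac{w-1}{2}\right) Q^{1/w}.
$$

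Next, since either of the two sufficient conditions alone already guarantees existence of a $(k,n,d,w)$-superimposed code of length $t$, I would take $Q$ to be the minimum of the two prefactors; this gives the tighter of the two lower bounds on $t$. Because the length $t$ must be a positive integer, a final ceiling yields \eqref{eq:boundwithW}. There is essentially no obstacle here beyond bookkeeping; the only mild technicality is that \eqref{eq:UnionBound} is a strict inequality while \eqref{eq:standBound} is non-strict, but this discrepancy is absorbed by the ceiling operation and has no effect on the final formula.
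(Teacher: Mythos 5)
Your proposal is correct and is exactly the paper's argument: the paper's proof is the one-line remark that the bound ``follows rearranging the terms'' in \eqref{eq:standBound} and \eqref{eq:UnionBound}, and your derivation simply carries out that rearrangement (solving $Q\,(A/B)^w\le 1$ for $t$, taking the smaller prefactor, and applying the ceiling). The observation about the strict versus non-strict inequality is a fair, harmless technicality that the paper itself glosses over.
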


\begin{cor}\label{cor:impBound2}
There exists a $(k,n,d)$-superimposed code of length $t$ with $k \leq n / e$, where
\begin{multline*}
	t \leq \ln 2 \cdot  dk  \log (n/k)  + e^2 \cdot k^2 \log (n/k) \\ - \frac{\left( 3e^2 - \ln 2\right)}{2} k \log (n/k) - d + O(1) \,.
\end{multline*}
\begin{proof}
Substitute $w = k \ln (n / k)$ in \eqref{eq:boundwithW} and upper bound
$$
	\min\left\{ n \binom{n-1}{k-1},  e k \left[\binom{n}{k} - \binom{n-k+1}{k} \right] \right\} < k \left( \frac{e n}{k} \right)^k\,.
$$
Therefore we obtain
\begin{multline}\label{eq:explicitBound}
	t \leq d \left(k \ln (n/k) - 1\right) + \frac{k}{2} \ln(n/k) \\ + e \cdot (k e^k)^{\frac{1}{k\ln(n/k)}}  k \left(k -\frac{3}{2} \right) \ln (n/k)  + O(1).
\end{multline}
Hence Corollary \ref{cor:impBound2} follows since $n \geq e k$ and $k^{1/k} \leq \frac{1}{\ln 2}$ for every integer $k \geq 1$.
\end{proof}
\end{cor}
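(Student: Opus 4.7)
The plan is to specialize Corollary~\ref{cor:impBound} to the weight $w = k \ln(n/k)$ (rounded to an integer, with the fractional discrepancy absorbed into the trailing $O(1)$). As a first step I would replace the minimum appearing in \eqref{eq:boundwithW} by the cleaner upper bound $k(en/k)^k$ suggested in the hint, which follows from $n\binom{n-1}{k-1} = k\binom{n}{k} \le k(en/k)^k$. Taking the $w$-th root and invoking the decisive identity $(n/k)^{1/\ln(n/k)} = e$ gives
\[
\bigl(k(en/k)^k\bigr)^{1/w} \;=\; e \cdot (ke^k)^{1/(k\ln(n/k))},
\]
which is the source of the bare factor $e$ visible in \eqref{eq:explicitBound}.

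Next I would rewrite $w(k-1) - (w-1)/2$ as $w(k - 3/2) + 1/2$ and substitute $w = k\ln(n/k)$ everywhere in \eqref{eq:boundwithW}. The three contributions become: the $(w-1)d$ term, which produces $dk\ln(n/k) - d$; the $(w-1)/2$ term, which produces $(k/2)\ln(n/k) + O(1)$; and the multiplicative term, which produces $e \cdot (ke^k)^{1/(k\ln(n/k))} \cdot k(k - 3/2)\ln(n/k) + O(1)$. Together these reproduce \eqref{eq:explicitBound}. Converting every natural logarithm into a base-$2$ logarithm via $\ln x = \ln 2 \cdot \log x$ turns $dk\ln(n/k)$ into $\ln 2 \cdot dk\log(n/k)$ and leaves a residual $(\ln 2/2)\, k\log(n/k)$ contribution from the $(w-1)/2$ term.

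The remaining task is to bound $e \cdot (ke^k)^{1/(k\ln(n/k))}$ sharply by $e^2/\ln 2$. The hypothesis $k \le n/e$ forces $\ln(n/k) \ge 1$, hence $e^{1/\ln(n/k)} \le e$; and the elementary inequality $k^{1/k} \le 1/\ln 2$ valid for every integer $k \ge 1$ (tight at $k=3$, where $3^{1/3} \approx 1.4422 < 1/\ln 2 \approx 1.4427$) gives $k^{1/(k\ln(n/k))} \le 1/\ln 2$. Multiplying these yields $(ke^k)^{1/(k\ln(n/k))} \le e/\ln 2$, so the quadratic-in-$k$ multiplier on $k(k-3/2)\log(n/k)$ becomes exactly $e^2$. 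Expanding produces $e^2 k^2 \log(n/k) - (3e^2/2)\, k\log(n/k)$; combining the latter with the $(\ln 2/2)\, k\log(n/k)$ residue gives the stated coefficient $-(3e^2 - \ln 2)/2$ on $k\log(n/k)$, while the $-d$ from $(w-1)d$ survives as the isolated $-d$ in the claim.

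The main obstacle is essentially algebraic bookkeeping rather than insight: the three distinct sources of $k\log(n/k)$ contributions (the $(w-1)/2$ additive term, the $-3/2$ shift inside $w(k-3/2)$, and the $\ln 2$ rescaling) must be combined consistently, and the bound $k^{1/k} \le 1/\ln 2$ has to be applied in its sharp form, since any looser substitute such as $k^{1/k} \le e^{1/e}$ would degrade the clean $e^2$ coefficient in front of $k^2\log(n/k)$.
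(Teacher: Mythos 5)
Your proposal is correct and follows essentially the same route as the paper: substitute $w=k\ln(n/k)$ into \eqref{eq:boundwithW}, bound the minimum by $k(en/k)^k$, use $(n/k)^{1/\ln(n/k)}=e$, and then apply $e^{1/\ln(n/k)}\le e$ (from $k\le n/e$) together with $k^{1/k}\le 1/\ln 2$ so that the $\ln 2$ from the base conversion cancels and yields the $e^2$ coefficient. You merely supply the bookkeeping the paper leaves implicit (the decomposition $w(k-1)-\tfrac{w-1}{2}=w(k-\tfrac32)+\tfrac12$ and the merging of the three $k\log(n/k)$ contributions), all of which checks out.
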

We note that in the explicit bound given in Corollary \ref{cor:impBound2} the leading coefficient of the term $k \log (n/k)$ can be improved, for $k \leq 0.667 \sqrt{n}$, by using a better estimation of the minimum in equation \eqref{eq:boundwithW} that comes from the use of the Local Lemma.

By exploiting the celebrated result by Moser and Tardos \cite{MT}, this directly implies a 
$O(n^k)$ randomized Las Vegas algorithm to construct the codes of Corollary \ref{cor:impBound2}

\smallskip

From the inequality \eqref{eq:OlgicaBound} we can derive an explicit upper bound on the length of the codes 
whose existence was showed in  \cite{Olgica}  when $w = k \ln (n/k)$ by upper bounding  $n \binom{n-1}{k-1}$ with  $k\left(\frac{en}{k}\right)^k$.
We report here the obtained result.

\begin{thm}\cite{Olgica}\label{th:OlgicaThm}
There exists a $(k,n,d)$-superimposed code of length $t$, where
\begin{multline*}
	t \leq 2 d\left(k \ln (n/k) - 1\right) + k \ln(n/k) \\ +  e \cdot (k e^k)^{\frac{1}{k\ln(n/k)}}  k (k-1) \ln (n/k) + O(1).
\end{multline*}
\end{thm}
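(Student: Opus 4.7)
The plan is to follow exactly the roadmap sketched in the paragraph preceding the theorem: start from the sufficient condition \eqref{eq:OlgicaBound}, solve it explicitly for $t$, substitute the specific choice $w=k\ln(n/k)$, and then apply the crude binomial bound $n\binom{n-1}{k-1}\le k(en/k)^k$ to eliminate the binomial. So the first step is to rearrange \eqref{eq:OlgicaBound}: since all quantities are positive, the inequality is equivalent to
\[
t > (2d+1)(w-1) + w(k-1)\left(n\binom{n-1}{k-1}\right)^{1/w},
\]
and therefore a code exists whenever $t$ is at least the ceiling of the right-hand side plus one. This already isolates a $2d(w-1)$ term (the one that will give the leading $2dk\ln(n/k)$ contribution) and an additive $w-1$ that will merge into the $k\ln(n/k)$ term.

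Next I would substitute $w=k\ln(n/k)$ into the two pieces. For the first piece, $(2d+1)(w-1)=2d(k\ln(n/k)-1)+k\ln(n/k)-1$, which gives the $2d(k\ln(n/k)-1)+k\ln(n/k)$ summands (up to an $O(1)$ correction absorbed in the rounding). For the second piece, I would use the identity $n\binom{n-1}{k-1}=k\binom{n}{k}$ together with $\binom{n}{k}\le (en/k)^k$ to get
\[
\left(n\binom{n-1}{k-1}\right)^{1/w}\le (ke^k)^{1/w}\,(n/k)^{k/w}.
\]
The algebraic trick that makes the final expression clean is that, for $w=k\ln(n/k)$, one has $k/w=1/\ln(n/k)$ and hence $(n/k)^{k/w}=e$. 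Multiplying by $w(k-1)=k(k-1)\ln(n/k)$ yields the desired term $e\,(ke^k)^{1/(k\ln(n/k))}\,k(k-1)\ln(n/k)$.

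Collecting the two contributions and absorbing the ceiling, the $-1$, and any other constant slack into the $O(1)$ remainder produces the stated bound. The derivation is essentially routine algebra; the only point that requires a moment of care is the clean cancellation $(n/k)^{1/\ln(n/k)}=e$, which is what pins down the choice $w=k\ln(n/k)$ as the one yielding the neat closed-form exponents. No probabilistic or combinatorial argument beyond \eqref{eq:OlgicaBound} itself is needed, so there is no real obstacle—only bookkeeping of constants, which is exactly what the $O(1)$ term is there to absorb.
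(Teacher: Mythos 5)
Your proposal is correct and follows exactly the route the paper itself indicates for this theorem: rearrange \eqref{eq:OlgicaBound} to isolate $t$, substitute $w=k\ln(n/k)$, bound $n\binom{n-1}{k-1}=k\binom{n}{k}\le k(en/k)^k$, and use the cancellation $(n/k)^{k/w}=e$. The algebra checks out and the leftover constants are legitimately absorbed into the $O(1)$ term, so there is nothing to add.
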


It is clear that our result given in equation \eqref{eq:explicitBound} improves the one of Theorem \ref{th:OlgicaThm}.

\begin{rem} \label{rem:NonExistence}
We note that it was proved in \cite{Olgica} that every $(k, n,d)$-superimposed code of length $t$ must satisfy
$$
   t \geq \min \left\{ n,  1 + (k-1)(d+1) \right\}	\,.
$$
This implies that if $k \geq \frac{n-1}{d+1} + 1$ then $t = n$, so we cannot construct a $(k,n,d)$-superimposed code of length $t$ that is better than the identity matrix of size $n \times n$.
\end{rem}

By Remark \ref{rem:NonExistence},  it is clear that the constraint $k \leq n/e$ in Corollary \ref{cor:impBound2} is reasonable since $1 + (k-1) (d+1) \geq e k$ for every $k, d \geq 2$.

We also note that a simple generalization of the method given by Cheng et al.  in \cite{Cheng} provide the following result.

\begin{thm}\label{th:ChengExtension}
There exists a $(k,n,d)$-superimposed code of length $t$,  $t \leq \frac{1}{B_k} \left(k\log(n/k) + \log(k e^k) \right)$, where
\begin{align}
&B_k = \max_{q \geq 2} B_{k,q}\,, \label{eq:Bk} \\
&B_{k,q} = \frac{- \log \left[1 - \left(1-\frac{1}{q}\right)^{k-1}\right]}{q+d}\,. \nonumber
\end{align}
For $k \to \infty$,  the point $q$ that maximize \eqref{eq:Bk} is linear in $k$.
\end{thm}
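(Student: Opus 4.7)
The plan is to generalize the block-construction of Cheng \emph{et al.} \cite{Cheng} by padding each block with $d$ trailing zeros, which enforces the runlength constraint for free. Fix an integer $q \geq 2$ and partition the rows of the to-be-built $t\times n$ matrix into $T$ consecutive blocks of length $q+d$, so that $t = T(q+d)$. In every block of every column, the last $d$ positions are fixed to $0$ and a single $1$ is placed uniformly at random among the first $q$ positions, independently across columns and blocks. Since the $1$ of a column in block $b$ lies at some position $p_b \in [1,q]$ and block $b+1$ starts only after the $d$ trailing zeros of block $b$, two consecutive ones of any column are separated by at least $(q-p_b) + d + (p_{b+1}-1) \geq d$ zeros. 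Hence every outcome automatically satisfies the runlength constraint, and each column has weight exactly $T$.

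Next, I would upper bound the probability of a fixed bad event $E_{i,B} = \{\supp(\mathbf{c}_i) \subseteq \supp(\mathbf{c}_B)\}$ block by block. Within one block, let $p$ be the (random) position of the $1$ of $\mathbf{c}_i$; the only candidate distinguishing row inside this block is row $p$. For each $j \in B$, the column $\mathbf{c}_j$ has a $0$ at row $p$ iff its own random $1$ falls in one of the other $q-1$ positions, which happens with probability $1 - 1/q$; by independence across $B$, the block ``witnesses'' the distinguishing property with probability $(1-1/q)^{k-1}$. Because the $T$ blocks are mutually independent,
$$
\Pr(E_{i,B}) = \bigl[\,1 - (1 - 1/q)^{k-1}\,\bigr]^{T}.
$$
A union bound over the $k\binom{n}{k} = n\binom{n-1}{k-1}$ pairs $(i,B)$ then shows that a valid code exists as soon as $T \geq \log(k\binom{n}{k})\big/\bigl(-\log[1-(1-1/q)^{k-1}]\bigr)$, i.e., (after multiplying by $q+d$) as soon as the length $t$ reaches $\log(k\binom{n}{k})/B_{k,q}$. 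Applying $\binom{n}{k} \leq (en/k)^k$ to bound $\log(k\binom{n}{k}) \leq k\log(n/k) + \log(ke^k)$, and then optimizing the free parameter $q$ over integers $q \geq 2$, yields the claimed bound with $B_k = \max_{q\geq 2}B_{k,q}$ in the denominator.

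For the asymptotic claim, I would substitute $q = \alpha k$ with $\alpha > 0$ a constant. Since $(1-1/(\alpha k))^{k-1} \to e^{-1/\alpha}$ as $k \to \infty$, we obtain
$$
B_{k,\alpha k} \;\sim\; \frac{-\log\bigl(1 - e^{-1/\alpha}\bigr)}{\alpha k + d},
$$
and (for $d$ fixed or, more generally, $d = O(k)$) the first-order optimality condition in $\alpha$ reduces to a transcendental equation that does not involve $k$, whose unique positive solution $\alpha^\star$ is a constant; hence $q^\star = \alpha^\star k$ is linear in $k$. The main technical obstacle is this last optimization step: one must verify that the maximizer of $B_{k,q}$ lies in the interior of $\{q \geq 2\}$, and characterize it through the limiting equation in $\alpha$. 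Everything else follows from the tailored random construction combined with a one-shot union bound, in the same spirit as Section~\ref{sec:superimposed}.
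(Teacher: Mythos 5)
Your construction is exactly the one the paper intends: the paper's (very terse) proof says to take a random $q$-ary matrix and map each symbol to a binary vector of length $q+d$ whose last $d$ entries are $0$, which is precisely your block construction with a uniformly random one-hot pattern in the first $q$ positions, followed by the same per-block probability computation, union bound over $k\binom{n}{k}$ events, and the estimate $\binom{n}{k}\leq (en/k)^k$. Your argument is correct and fills in the details the paper delegates to Cheng \emph{et al.}, so no further comparison is needed.
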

The proof of Theorem \ref{th:ChengExtension} is similar to the one in \cite{Cheng}, we only need to ensure that when we construct a binary matrix $M$ starting from a random $q$-ary matrix each column $\bf c$ of $M$ has a run of at least $d$ $0$'s between any two $1$'s. This can be done by mapping each $q$-ary symbol into a binary vector of length $q+d$ where the last $d$ elements are fixed to $0$.

If we lower bound $B_k$ with $B_{k,  q}$ for the choice ${q=\frac{1}{\ln2} (k-1)}$  then,  for $k$ sufficiently large,  Theorem \ref{th:ChengExtension}  gives the following explicit bound on the minimum length $t$ of $(k,n,d)$-superimposed codes
\begin{multline}\label{eq:ChengExplicit}
t \leq d k \log (n/k)   +  \frac{1}{\ln 2} \cdot  k(k-1) \log (n/k) \\ + \left( \frac{1}{\ln 2} (k-1) + d \right) \log\left(k e^k\right) + O(1).
\end{multline}
One can see that this bound already improves, for $k$ sufficiently large, the one given in Theorem \ref{th:OlgicaThm} but not the one obtained in Corollary \ref{cor:impBound2} for $k < d$.

\section{Selectors}

Selectors were introduced in \cite{Vaccaro2} and they can be seen as a generalization of superimposed codes.  Like superimposed codes, selectors find applications in many circumstances, like Group Testing \cite{Vaccaro2},  efficient conflict resolution in the  transmission model of  \cite{KG}, etc.. In this section, we introduce  selectors in which the weight of each column is equal to some fixed value $w$ and where  any two 1's  in each column of $M$  are  separated by a run of  at least $d$ 0's, so that they can be applied to the scenario of \cite{Olgica}. Successively, we will show that selectors can be used to construct efficient two-stage procedure for Group Testing with runlength constraints, that require  a much  smaller number of tests, with respect to the NAGT considered in \cite{Olgica} and in the previous section of the present paper.
 Let us start by giving  some definitions.

\begin{defn}\label{def:selectors} 
Let $k$, $n$, $d$, $p$ be positive integers,  $1 \leq p \leq k \leq n$.
A $(k, n,d,p)$-\emph{selector} is a $t \times n$ binary matrix $M$ such that  any two 1's  in each column of $M$  are  separated by a run of  at least $d$ 0's, and for any $k$-tuple of the columns of $M$ we have that 
at least $p$ rows of the identity matrix of size $k \times k$ are contained in that $k$-tuple of columns.  The number of rows $t$ of $M$ is called the length of the $(k,n,d, p)$-selector.
\end{defn}

One can see that for $p=k$ we get the definition of $(k,n,d)$-superimposed codes studied  in Section \ref{sec:superimposed}.

\begin{defn}
A $(k,n,d,p, w)$-\emph{selector} is a $(k,n,d,p)$-selector with the additional constraint that each column has weight $w$.
\end{defn}

It can be seen (see \cite[Lemma 2]{Vaccaro1}) that Definition \ref{def:selectors} is equivalent to requiring that for any $k$-tuple of columns of a $(k,n,d,p)$-selector and any $k-p+1$ columns among the selected $k$-tuple, there exists a row of the identity matrix of size $k \times k$  where the $1$ is contained in one of the $k-p+1$ columns. 
Therefore, thanks to this equivalence,  we can generalize the proof of Theorem \ref{th:LLExistence} to obtain the following.

\begin{thm}\label{th:LLSelectors}
There exists a $(k,n,d,p,w)$-selector of length $t$,  where $t$ is the minimum integer such that the following inequality holds
\begin{multline}\label{eq:LLSelBound}
e  \binom{k}{p-1} \left[ \binom{n}{k} - \binom{n-k}{k} \right]\cdot \\ \left( \frac{w (k-1) - \frac{w-1}{2}}{t - (w-1) d - \frac{w-1}{2}} \right)^{w(k-p+1)} \leq 1.
\end{multline}
\end{thm}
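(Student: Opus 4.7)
My plan is to carry out a direct generalization of the proof of Theorem \ref{th:LLExistence}, using the equivalent formulation of the $(k,n,d,p)$-selector property stated just before the theorem. I would start with the same random construction: a $t\times n$ matrix $M$ whose $n$ columns are drawn independently and uniformly from the $\binom{t-(w-1)d}{w}$ length-$t$ weight-$w$ binary vectors with 1's separated by runs of at least $d$ 0's (Lemma \ref{lem:enumerative}). For each pair of disjoint index sets $A,B\subseteq[1,n]$ with $|A|=k-p+1$, $|B|=p-1$, writing $T=A\cup B$, I would introduce the bad event $E_{A,B}$: the event that $\supp(\mathbf{c}_a)\subseteq\supp(\mathbf{c}_{T\setminus\{a\}})$ holds for every $a\in A$. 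By the equivalent formulation, this is precisely the failure of the selector condition at $(T,A)$. A standard count shows that $E_{A,B}$ is mutually independent of the collection $\{E_{A',B'}:(A'\cup B')\cap T=\emptyset\}$, so the number of events on which it depends is $\binom{k}{p-1}\bigl[\binom{n}{k}-\binom{n-k}{k}\bigr]$, which is the dependency factor appearing in \eqref{eq:LLSelBound}.

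The quantitative heart of the proof is then bounding $\Pr(E_{A,B})$. My plan is to extend the worst-case analysis from Theorem \ref{th:LLExistence} to all $|A|=k-p+1$ columns of $A$ simultaneously: for each individual $a\in A$, the conditional probability that $\supp(\mathbf{c}_a)\subseteq\supp(\mathbf{c}_{T\setminus\{a\}})$ given the other columns is maximized when the $w(k-1)$ ones of the other $k-1$ columns of $T$ occupy distinct rows, which gives the single-column bound $\binom{w(k-1)}{w}/\binom{t-(w-1)d}{w}$. Combining this bound across the $k-p+1$ columns of $A$ and then applying Corollary \ref{cor:boundBinomial} should produce the target estimate $\Pr(E_{A,B})\le \bigl(\tfrac{w(k-1)-(w-1)/2}{t-(w-1)d-(w-1)/2}\bigr)^{w(k-p+1)}$, after which feeding this into Lemma \ref{lem:LLL} with the dependency factor above yields \eqref{eq:LLSelBound} directly.

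The main obstacle I anticipate is rigorously justifying the $(k-p+1)$-fold combination step: the individual events $F_a=\{\supp(\mathbf{c}_a)\subseteq\supp(\mathbf{c}_{T\setminus\{a\}})\}$ for $a\in A$ are coupled, since each involves every other column of $A$, so a product-form upper bound cannot come from independence alone. My first attempt would be a sequential conditioning argument: first condition on $\mathbf{c}_B$, then peel off the columns of $A$ one at a time, invoking the single-column worst-case bound at each step so that $|A|$ factors of the desired form accumulate. If this sequential scheme fails to close cleanly---for instance, because the $F_a$ turn out to be positively correlated---I would fall back on enlarging $E_{A,B}$ to the event ``$\supp(\mathbf{c}_a)$ lies in a fixed set of size at most $w(k-1)$ for each $a\in A$,'' chosen to be the worst-case support of the neighbouring columns; under such a fixed set the $A$-columns are independent, the product of $|A|$ copies of the single-column bound is immediate, and any cost of union-bounding over admissible sets should be absorbable into the constants already in the statement. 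The remaining verification---counting the dependent events and plugging into the LLL---then follows the routine pattern of the proof of Theorem \ref{th:LLExistence}.
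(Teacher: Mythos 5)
Your plan reproduces the paper's proof step for step: same random ensemble, same bad events $E_{B_1,B_2}$, same dependency count $\binom{k}{p-1}\bigl[\binom{n}{k}-\binom{n-k}{k}\bigr]$, same appeal to Lemma \ref{lem:LLL}. The step you single out as the main obstacle --- promoting the single-column worst-case estimate to the $(k-p+1)$-fold product in \eqref{eq:probUpperSel} --- is indeed the crux, and it is precisely the step the paper itself passes over with ``by the same argument used in the proof of Theorem \ref{th:LLExistence} we can easily upper bound''. Unfortunately, neither of your two proposed repairs closes it. The sequential conditioning cannot work because the events $F_a$ are strongly positively correlated: after conditioning on all columns except $\mathbf{c}_{a_1}$ you can extract the factor $\binom{w(k-1)}{w}/\binom{t-(w-1)d}{w}$ for $F_{a_1}$ only by throwing away the remaining indicators (which still depend on $\mathbf{c}_{a_1}$), and conditioning on $F_{a_1}$ can push the conditional probability of $F_{a_2}$ all the way to $1$. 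The fallback fails as well: the covering set for column $a$ is $\supp(\mathbf{c}_{T\setminus\{a\}})$, which contains the other, random, columns of $A$, so it cannot be frozen to a fixed worst-case set; making it fixed requires a union bound over sets of size up to $w(k-1)$, costing a factor of order $\binom{t}{w(k-1)}$ that is not absorbable.

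The gap is not merely technical, because the product bound is false. Take $k=2$, $p=1$, $w=1$, so $B_2=\emptyset$ and $A=\{1,2\}$: the event $E_{A,\emptyset}$ is $\supp(\mathbf{c}_1)=\supp(\mathbf{c}_2)$, which has probability $1/t$, while \eqref{eq:probUpperSel} asserts $1/t^2$. Consistently, \eqref{eq:LLSelBound} would then promise a $(2,n,d,1,1)$-selector of length $t=O(\sqrt{n})$, whereas any such selector must have all $n$ weight-one columns pairwise distinct and hence $t\ge n$. So for $p<k$ the exponent $w(k-p+1)$ cannot be obtained by raising the single-column estimate to the power $|A|$; a correct argument has to account for the mutual covering among the columns of $A$ (for instance by bounding $\bigl|\bigcup_{a\in A}\supp(\mathbf{c}_a)\bigr|$ on the bad event and enumerating configurations), and will generally yield a weaker exponent. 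For $p=k$ the event reduces to a single covering condition and both your argument and the paper's are sound, but for $p<k$ neither establishes \eqref{eq:LLSelBound} as stated.
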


\begin{proof}
Let $M$ be a $t \times n$ binary matrix, where each column $\mathbf{c}$ is picked uniformly at random between the set of all distinct binary vectors of length $t$ such that each column has weight $w$ and with distance between ones in each column at least $d$.  As in Theorem \ref{th:LLExistence},  by Lemma \ref{lem:enumerative} we have that
$$\Pr(\mathbf{c}) = \binom{t - (w-1) d}{w}^{-1}\,.$$
For a given pair of sets $B_1$, $B_2 \subseteq [1, n]$ where $|B_1| = k-p+1$, $|B_2| = p-1$ and $B_1 \cap B_2 = \emptyset$,  let $E_{B_1, B_2}$ be the event such that for each column $\mathbf{c}_i$ with $i \in B_1$ and every row $r$ where $\mathbf{c}_i (r) = 1$ there exists an index $j \in (B_1 \cup B_2 \setminus \{i\})$ such that $\mathbf{c}_j$ has $1$ in that same row $r$.  There are $\binom{k}{p-1} \binom{n}{k}$ such events.  Then,  by the same argument  used in the proof of Theorem \ref{th:LLExistence} we can easily upper bound the probability of such events as follows
\begin{equation}\label{eq:probUpperSel}
	\Pr(E_{B_1, B_2}) \leq \left( \frac{\binom{w(k-1)}{w}}{\binom{t-(w-1)d}{w}}\right)^{k-p+1}\,.
\end{equation}
Using Corollary \ref{cor:boundBinomial} we upper bound \eqref{eq:probUpperSel} as follows
\begin{equation}\label{eq:probUpperSel2}
\Pr(E_{B_1, B_2}) \leq  \left( \frac{w (k-1) - \frac{w-1}{2}}{t - (w-1) d - \frac{w-1}{2}} \right)^{w(k-p+1)}\,.
\end{equation}
Let us fix an arbitrary event $E_{A_1, A_2}$ then it is easy to see that it is mutually independent from all the events $E_{A'_1, A'_2}$ such that $A'_1 \subseteq [1,n] \setminus (A_1 \cup A_2)$,  $A'_2 \subseteq [1,n] \setminus (A_1 \cup A_2 \cup A'_1)$.  The number of events $E_{A'_1, A'_2}$ is equal to $\binom{k}{p-1} \binom{n-k}{k}$.  Therefore each event $E_{A_1, A_2}$ is dependent of at most
\begin{equation}\label{eq:DSel}
	f = \binom{k}{p-1} \left[ \binom{n}{k} - \binom{n-k}{k} \right]
\end{equation}
other events.  If the probability that none of the events $E_{A_1,A_2}$ occurs is strictly positive then there exists a matrix $M$ that is a $(k,n,d,p,w)$-selector of length $t$.  Using Lemma \ref{lem:LLL} and taking $p$ equal to the RHS of \eqref{eq:probUpperSel2} and $f$ as defined in equation \eqref{eq:DSel},  Theorem \ref{th:LLSelectors} follows.
\end{proof}

\begin{cor}\label{cor:selBoundOnT}
There exists a $(k,n,d,p, w)$-selector of length $t$, where
\begin{multline}\label{eq:boundwithWSel}
	t \leq \Bigg\lceil (w-1) d +  \frac{w-1}{2} + \left(w(k-1) - \frac{w-1}{2}\right) \cdot  \\ \left(e \binom{k}{p-1} \left[ \binom{n}{k} - \binom{n-k}{k} \right]\right)^{\frac{1}{w(k-p+1)}}  \Bigg\rceil \,.
\end{multline}
\begin{proof}
It follows rearranging the terms in equation \eqref{eq:LLSelBound}.
\end{proof}
\end{cor}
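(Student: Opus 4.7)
The plan is to derive the explicit bound on $t$ by purely algebraic rearrangement of inequality \eqref{eq:LLSelBound} in Theorem \ref{th:LLSelectors}. Writing $F = e \binom{k}{p-1}\left[\binom{n}{k} - \binom{n-k}{k}\right]$ for the combinatorial prefactor, the LLL condition reads
\begin{equation*}
F \cdot \left( \frac{w (k-1) - \tfrac{w-1}{2}}{t - (w-1) d - \tfrac{w-1}{2}} \right)^{w(k-p+1)} \leq 1 .
\end{equation*}
The first step is to move $F$ to the right-hand side, obtaining an upper bound $1/F$ on the bracketed fraction raised to the $w(k-p+1)$-th power.

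Next, I would take the $w(k-p+1)$-th root of both sides, which is legitimate because all quantities are positive (assuming $t > (w-1)d + (w-1)/2$, which will hold in the regime of interest). Inverting the resulting inequality then yields a lower bound on the denominator $t - (w-1)d - (w-1)/2$ in terms of the numerator $w(k-1) - (w-1)/2$ multiplied by $F^{1/(w(k-p+1))}$. A final addition of $(w-1)d + (w-1)/2$ to both sides isolates $t$ and produces exactly the real-valued expression inside the ceiling in \eqref{eq:boundwithWSel}.

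Since Theorem \ref{th:LLSelectors} guarantees existence at the minimum integer $t$ satisfying \eqref{eq:LLSelBound}, it suffices to take the ceiling of the real-valued lower bound just derived, yielding the stated formula. There is no substantive obstacle: the result is a one-line algebraic consequence of Theorem \ref{th:LLSelectors}, and the only care required is the standard check that each manipulation (taking a positive root, inverting a positive fraction, rounding up to an integer) preserves the direction of the inequality and the existence conclusion.
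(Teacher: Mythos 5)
Your proposal is correct and matches the paper's proof, which is precisely the same one-line rearrangement of \eqref{eq:LLSelBound}: isolate the fraction, take the $w(k-p+1)$-th root, invert, and add $(w-1)d + \frac{w-1}{2}$, then take the ceiling to obtain the minimum integer $t$. The extra care you note about positivity and inequality directions is sound but not spelled out in the paper.
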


Again, by exploiting the  result by Moser and Tardos \cite{MT}, we get  a $O(n^k)$ randomized Las Vegas  algorithm to construct the codes of Corollary \ref{cor:selBoundOnT}

Thanks to Corollary \ref{cor:selBoundOnT} we obtain the following upper bound on the minimum length of $(k,n,d, p)$-selectors.

\begin{cor}\label{cor:selBoundExplicit}
There exists a $(k,n,d, p)$-selector of length $t$ with $k \leq n / e$, where
\begin{multline*}
    t \leq \ln 2 \cdot \frac{d k}{k-p+1}  \log (n/k) \\ + \ln 2 \cdot e^{3+\frac{1}{e}} \frac{k^2}{k-p+1}  \log (n/k) + O(k \log (n/k)) \,.
\end{multline*}
\end{cor}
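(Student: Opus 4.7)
The plan is to mirror the derivation of Corollary~\ref{cor:impBound2} from Corollary~\ref{cor:impBound}: substitute a judicious choice of $w$ into the bound \eqref{eq:boundwithWSel} of Corollary~\ref{cor:selBoundOnT} and simplify using elementary binomial estimates. The natural choice is
$$
w=\left\lceil \frac{k\ln(n/k)}{k-p+1}\right\rceil,
$$
which is a positive integer since the hypothesis $k\le n/e$ forces $\ln(n/k)\ge 1$. With this choice, $(w-1)d\le \frac{dk\ln(n/k)}{k-p+1}=\ln 2\cdot\frac{dk}{k-p+1}\log(n/k)$, producing the first main term of the statement.

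The technical core is a clean upper bound on
$$
Q:=\left(e\binom{k}{p-1}\left[\binom{n}{k}-\binom{n-k}{k}\right]\right)^{1/(w(k-p+1))}.
$$
Using $w(k-p+1)\ge k\ln(n/k)$ together with $\binom{k}{p-1}\le 2^k$ and $\binom{n}{k}-\binom{n-k}{k}\le\binom{n}{k}\le (en/k)^k$, one gets
$$
Q\le e^{1/(k\ln(n/k))}\cdot 2^{1/\ln(n/k)}\cdot (en/k)^{1/\ln(n/k)}.
$$
The condition $\ln(n/k)\ge 1$ yields $2^{1/\ln(n/k)}\le 2$ and $(en/k)^{1/\ln(n/k)}=e\cdot e^{1/\ln(n/k)}\le e^{2}$, while $k\ln(n/k)\ge e$ (valid for $k\ge 3$) gives $e^{1/(k\ln(n/k))}\le e^{1/e}$. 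Multiplying these three estimates, $Q\le 2e^{2+1/e}=e^{\ln 2+2+1/e}<e^{3+1/e}$.

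It remains to collect the lower-order contributions. Since $w(k-1)-\tfrac{w-1}{2}\le wk$, the second summand of \eqref{eq:boundwithWSel} is bounded by
$$
wk\cdot Q\le \left(\tfrac{k^2\ln(n/k)}{k-p+1}+k\right)e^{3+1/e}=\ln 2\cdot e^{3+1/e}\cdot\tfrac{k^2}{k-p+1}\log(n/k)+O(k),
$$
which yields the second main term. The leftover slacks — the $\tfrac{w-1}{2}$ summand, the ceiling rounding, and the finitely many small-$k$ exceptions where only the weaker bound $Q\le 2e^{3}$ is available — are each $O\!\left(\tfrac{k}{k-p+1}\log(n/k)\right)=O(k\log(n/k))$ and so get absorbed into the stated error term.

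The main subtlety, and the only place that requires care, is showing that the crude bound on $Q$ actually lands below $e^{3+1/e}$; this hinges on combining the two regimes $\ln(n/k)\ge 1$ and $k\ln(n/k)\ge e$, together with the observation that the constant-$k$ boundary cases contribute at most $O(k\log(n/k))$.
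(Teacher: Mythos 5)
Your proposal is correct and follows essentially the same route as the paper: substitute $w\approx\frac{k}{k-p+1}\ln(n/k)$ into Corollary~\ref{cor:selBoundOnT}, bound $\binom{n}{k}-\binom{n-k}{k}\le(en/k)^k$, and use $\ln(n/k)\ge 1$ to land the constant $e^{3+1/e}$. The only (harmless) deviation is that you bound $\binom{k}{p-1}\le 2^k$ where the paper uses $\binom{k}{p-1}\le\bigl(ek/(p-1)\bigr)^{p-1}$ together with the maximization of $x^{1/x}$ at $x=e$; both estimates yield a constant below $e^{3+1/e}$, and your explicit treatment of the ceiling on $w$ and the small-$k$ cases is if anything more careful than the paper's.
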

\begin{proof}
Substituting $w = \frac{k}{k-p+1} \ln (n / k)$ in \eqref{eq:boundwithWSel} and using the well-known inequality $\binom{m}{s} \leq  \left(\frac{e m}{s}\right)^s$, we get
\begin{multline*}
    t \leq \frac{d k}{k-p+1} \ln (n/k)\\ + e \left[ e^{1 + \frac{p}{k}} \left( \frac{k}{p-1} \right)^{\frac{p-1}{k}}  \right]^{\frac{1}{\ln(n/k)}} \frac{k^2}{k-p+1}  \ln (n/k) \\ + O(k \ln (n/k)) \,.
\end{multline*}
Hence Corollary \ref{cor:selBoundExplicit} follows since $p \leq k$, $n \geq e k$ and since the function $x^{1/x}$ takes its maximum at $x = e$. 
\end{proof}

\subsection{Application of $(k,n,d, p)$-selectors to two-stage Group Testing with runlength constraints}
We need the following result, whose proof for "classical" selectors (that is, for selectors without the runlength constraint studied in  this paper) is implicit in the discussion before Theorem 3 of \cite{Vaccaro2}. It is trivial to  see that  the proof carries out also in the present scenario.

\begin{lem}\label{lem:selectorList}
Let $M$ be a $(k,n,d,p)$-selector with $t$ rows, and let ${\bf f}$ be the  $t\times 1$ columns vector obtained by the bitwise OR of at most $q$, $q\leq p-1$, columns ${\mathbf{c}_{i_1}}, \ldots , {\mathbf{c}_{i_{q}}}$ of $M$. Then, apart from  
${\mathbf{c}_{i_1}}, \ldots , {\mathbf{c}_{i_{q}}}$, 
there are at most other $k-q-1$ columns of $M$  whose 1's are in a subset of the positions in which the vector ${\bf f}$ also has 1's.
\end{lem}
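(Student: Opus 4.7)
My plan is to argue by contradiction, relying on the equivalent reformulation of the selector property recalled in the paragraph immediately preceding Theorem~\ref{th:LLSelectors} (and proved as Lemma~2 of \cite{Vaccaro1}): for every $k$-tuple of columns of $M$ and every subset $S$ of $k-p+1$ of those columns, there is a row $r$ whose restriction to the $k$-tuple is a row of the $k\times k$ identity, with its unique $1$ lying in some column of $S$. The runlength constraint plays no role here; only the incidence structure of the selector matters, which is why the proof from \cite{Vaccaro2} for classical selectors can be transcribed almost verbatim.

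Concretely, I would assume toward a contradiction that, besides ${\mathbf{c}_{i_1}},\ldots,{\mathbf{c}_{i_q}}$, there exist at least $k-q$ further columns ${\mathbf{c}_{j_1}},\ldots,{\mathbf{c}_{j_{k-q}}}$ of $M$ all of whose supports are contained in $\supp(\mathbf{f})$. These together form a distinguished $k$-tuple $T=\{{\mathbf{c}_{i_1}},\ldots,{\mathbf{c}_{i_q}},{\mathbf{c}_{j_1}},\ldots,{\mathbf{c}_{j_{k-q}}}\}$ of distinct columns of $M$. Since $q\leq p-1$, we have $k-q\geq k-p+1$, so I can choose a $(k-p+1)$-subset $S$ entirely from $\{{\mathbf{c}_{j_1}},\ldots,{\mathbf{c}_{j_{k-q}}}\}$. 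Applying the equivalent selector property to $T$ and $S$ gives a row $r$ and an index $s$ such that ${\mathbf{c}_{j_s}}$ has a $1$ in row $r$ while every other column of $T$ carries $0$ in that row; in particular each ${\mathbf{c}_{i_\ell}}$, $\ell=1,\ldots,q$, is $0$ at $r$. Hence $\mathbf{f}(r)=0$ while ${\mathbf{c}_{j_s}}(r)=1$, contradicting $\supp({\mathbf{c}_{j_s}})\subseteq\supp(\mathbf{f})$.

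I do not anticipate a real obstacle: the only subtlety is matching up the inequalities, namely that the hypothesis $q\leq p-1$ is precisely what yields $k-q\geq k-p+1$ and therefore allows the guaranteed $(k-p+1)$-subset to be placed wholly inside the "contained" columns, excluding all $q$ generators of $\mathbf{f}$ from the private-$1$ row. Once this bookkeeping is in place, the contradiction is immediate.
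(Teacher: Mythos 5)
Your proof is correct and is precisely the argument the paper leaves implicit (the paper only points to the discussion before Theorem~3 of \cite{Vaccaro2} and asserts the runlength constraint changes nothing): the contradiction via the equivalent reformulation of the selector property, choosing the $(k-p+1)$-subset entirely among the covered columns so that the guaranteed private $1$ falls outside $\supp(\mathbf{f})$, is exactly the intended route, and the inequality $q\leq p-1\Rightarrow k-q\geq k-p+1$ is the only bookkeeping needed.
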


Now we proceed as follows. Let $k$ be an upper bound on the number of possible positives in the Group Testing problem. We perform all the tests corresponding to the rows of a  $(2k,n,d, k+1)$-selector $M$, as explained in the introduction. More precisely, the generic $i$-th pool $T_i$, for $i=1, \ldots , t$, contains  all elements $j\in [1,n]$ for which $M_{ij}=1$.
After having performed (in parallel) all tests on pools $T_1, \ldots , T_t$,
we get a "sindrome"  vector ${\bf f}$ (of dimension $t\times 1$) equal to the bitwise OR of the (at most) $k$ columns that correspond to the unknown positive elements. The number of columns of $M$ that are "covered" by ${\bf f}$ (that is, that have their  1's  in a subset of the positions in which the vector ${\bf f}$ also has 1's) is upper bounded  by $2k$ (by Lemma \ref{lem:selectorList}). In other words, there are at most $s\leq 2k$ potentially positive elements, and the true positive are among them. Hence,  one can test individually those $s$ elements to discover the true positives. Altogether, we have used  $t+2k$ tests.

By using Corollary \ref{cor:selBoundExplicit} to estimate   $t$, we get that we can discover all the positive elements by performing  a number of tests upper bounded by a quantity that is  
\begin{equation}\label{eq:optsel}
    2d\ln(n/k)+O(k \ln (n/k)).
\end{equation}
The bound (\ref{eq:optsel})  shows that our two-stage Group Testing algorithm  outperforms both
the NAGT algorithm presented in \cite{Olgica} and also our improved one
given  in the previous section of the present  paper. It is interesting to notice that the  bound (\ref{eq:optsel})
is information-theoretic optimal, for $d=O(k)$, and that this optimality can be achieved by introducing the least amount 
of adaptivity in the testing algorithm.


\end{document}